\newtheorem{thm}{Theorem}
\newtheorem{prop}[thm]{Proposition}
\newtheorem{coro}[thm]{Corollary}
\newenvironment{proof}{\noindent{\it Proof. \hskip0pt}}
                      {$\square$\par\medskip}
\begin{document}


\title{Separability of qubit-qudit quantum states with strong positive partial transposes}

\author{Kil-Chan Ha}
\affiliation{Faculty of Mathematics and Statistics, Sejong University, Seoul 143-747, Korea}
\date{\today}

\begin{abstract}
We show that all $2\otimes 4$ states with strong positive partial transposes (SPPT) are separable. We also construct a family of $2\otimes 5$ entangled SPPT states, so the conjecture on the separability of SPPT states are completely settled. In addition, we clarify the relation between the set of all $2\otimes d$ separable states and the set of all $2\otimes d$ SPPT states for the case of $d=3,4$.
\end{abstract}

\pacs{03.65.Ud, 03.67.Mn, 03.67.-a}
\keywords{positive linear maps, optimal entanglement witness, spanning property}

\maketitle


The notion of quantum entanglement plays a key role in the current study of quantum information and quantum computation theory. One of the central problems in the theory of quantum entanglement is to check whether a given density matrix representing a quantum state of composite system is separable or entangled. Let us recall that a state $\rho$ acting on the Hilbert space $\mathcal H_A\otimes \mathcal H_B$ is called separable if it is a convex combination of product states, that is, $\rho=\sum_k p_k\rho_k\otimes \tilde \rho_k$, where $\rho_k$ and $\tilde \rho_k$ are states acting on $\mathcal H_A$ and $\mathcal H_B$, respectively \cite{werner}.

There are two main criteria for separability: One criterion was given by Horodecki {\it et al.} \cite{horo-1} using positive linear maps between matrix algebras, and this was formulated as the notion of entanglement witnesses \cite{terhal}. Another criterion, the so called partial positive transpose (PPT) criterion \cite{peres} tells us that if a state $\rho$ is separable then its partial transposition $\rho^{T_A}=(T\otimes \openone)\rho$ is positive. It was shown by Horodecki {\it et al.} \cite{horo-2} that the PPT criterion is also a sufficient condition for separability in the system $2\otimes 2$ and $2\otimes 3$. For higher dimensions, this is not the case by a work of Woronowicz \cite{woro} who gave an example of a $2\otimes 4$ PPT entangled state. Such examples were also given in Refs. \cite{choi, stor} for the $3\otimes 3$ cases, in the early eighties. See also Ref. \cite{phoro} for the $2\otimes 4$ case. Therefore, it is important to understand which PPT states are separable and which are entangled.

In this context, a subclass of PPT states, whose PPT properties are ensured by the canonical construction using Cholesky decomposition, were considered in Ref.~\cite{chrus}. Such states are called strong positive partial transpose (SPPT) states. Based on several examples of SPPT states, it was conjectured in \cite{chrus} that all SPPT states are separable. Unfortunately this is not true in general $m\otimes n$ SPPT states with $m,\,n >2$ since there exist entangled $3\otimes 3$ states which are SPPT \cite{ha10}. However, this conjecture is still open for $2\times d$ SPPT states. We note that $2\otimes d$ systems are particularly useful since it allows us to determine a number of separability properties in the multiqubit case \cite{dct}, and were intensively analyzed in \cite{kckl}. See also Ref. \cite{chrus2} in which it was proved that SPPT notion can be used for witnessing quantum discord in $2\otimes d$ systems.

In this Brief Report, we show that all $2\otimes 4$ SPPT states are separable, but  the conjecture proposed in \cite{chrus} does not hold true for general $2\otimes d$ systems with $d\ge 5$. This result displays the difference between  the $2\otimes d$ case and the general $m\otimes n$ one ($m,n>2$), even difference between $2\otimes 4$ and $2\otimes 5$ cases. We also clarify the relation between the notion of separability and the SPPT property.

We begin with the definition of a $2\otimes d$ SPPT state. Consider the following class of upper triangular block matrices $\mathbf X$ and $\mathbf Y$:
\begin{equation}\label{eq:tri_block}
\mathbf X =\begin{pmatrix} X_1 & S X_1\\ 0 & X_2\end{pmatrix},\quad
\mathbf Y =\begin{pmatrix} X_1 & S^{\dagger} X\\ 0 & X_2\end{pmatrix},
\end{equation}
where $X_k$ and $S$ are arbitrary $d\times d$ matrices. Then we say that a state
\begin{equation}\label{sppt_state}
 \rho=\mathbf{X}^{\dagger}\mathbf{X}
=\begin{pmatrix} 
X_1^{\dagger}X_1 & X_1^{\dagger}S X_1\\ X_1^{\dagger}S^{\dagger}X_1 & X_1^{\dagger}S^{\dagger}S X_1+X_2^{\dagger}X_2
\end{pmatrix}
\end{equation}
with $\mathbf{X}$ defined in \eqref{eq:tri_block} has SPPT if 
$\rho^{T_A}=\mathbf Y^{\dagger} \mathbf Y$ with $\mathbf Y$ defined in \eqref{eq:tri_block}. It is clear that $\rho=\mathbf X^{\dagger}\mathbf X$ has SPPT if and only if
\begin{equation}\label{cond:sppt}
X_1^{\dagger}S^{\dagger}SX_1=X_1^{\dagger}SS^{\dagger}X_1.
\end{equation}
We also note that $\rho$ has SPPT if and only if $(\openone \otimes V)^{\dagger}\rho (\openone\otimes V)$ has SPPT with nonsingular $d\times d$ matrix.

First, we consider $2\otimes d$ SPPT state $\rho$ with $r(X_1^{\dagger}X_1)=d$, where $r(X)$ denotes the rank of $X$. In this case, $X_1$ is nonsingular, and so $S$ is normal matrix by the condition \eqref{cond:sppt}. Thus we have the spectral decomposition for $S$
\[
S=\sum_{i=1}^d \lambda_i P_i,
\]
where $P_i$'s are rank one projections with $\sum_{i=1}^d P_i=\openone$. 
Then we can write 
\[
\begin{pmatrix} \openone & S\\S^{\dagger} & S^{\dagger}S\end{pmatrix}
=\sum_{i=1}^d \sigma_i \otimes P_i\quad {\text with }\ 
\sigma_i=\begin{pmatrix} 1 & \lambda_i\\ \lambda_i^* & |\lambda_i|^2\end{pmatrix}
\]
and we see that
\[
\begin{pmatrix} X_1^{\dagger} & 0 \\ 0 & X_1^{\dagger}\end{pmatrix}
\begin{pmatrix} \openone & S\\S^{\dagger} & S^{\dagger}S\end{pmatrix}
\begin{pmatrix} X_1 & 0 \\ 0 & X_1 \end{pmatrix}
=
\sum_{i=1}^d (\sigma_i \otimes X_1^{\dagger}P_i X_1)
\]
is an unnormalized separable state.
Therefore, we can conclude that
\[
\rho=\begin{pmatrix} X_1^{\dagger} & 0 \\ 0 & X_1^{\dagger}\end{pmatrix}
\begin{pmatrix} \openone & S\\S^{\dagger} & S^{\dagger}S\end{pmatrix}
\begin{pmatrix} X_1 & 0 \\ 0 & X_1 \end{pmatrix}+
\begin{pmatrix} 0 & 0 \\ 0 & X_2^{\dagger}X_2\end{pmatrix}
\]
is separable since the second matrix in the righthand side is $|1\rangle \langle 1|\otimes X_2^{\dagger}X_2$, where $|1\rangle =(0,1)^{\rm t}$. 
Consequently we have the following.
\begin{prop} Let $\rho$ be a $2\otimes d$ SPPT state of the form \eqref{sppt_state} with $r(X_1^{\dagger}X_1)=d$. Then $\rho$ is separable.
\end{prop}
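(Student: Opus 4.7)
The rank condition $r(X_1^\dagger X_1)=d$ forces $X_1$ to be invertible, which is the key that unlocks everything. My first step would be to cancel $X_1^\dagger$ on the left and $X_1$ on the right of the SPPT condition \eqref{cond:sppt} to deduce $S^\dagger S = S S^\dagger$, i.e., $S$ is normal. The spectral theorem then gives $S = \sum_{i=1}^d \lambda_i P_i$ with rank-one mutually orthogonal projections $P_i$ summing to $\openone$.

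Next, I would split $\rho$ into two pieces: a ``principal block''
\[
\rho_0 = \begin{pmatrix} X_1^\dagger X_1 & X_1^\dagger S X_1\\ X_1^\dagger S^\dagger X_1 & X_1^\dagger S^\dagger S X_1 \end{pmatrix},
\]
plus the remainder $|1\rangle\langle 1|\otimes X_2^\dagger X_2$, which is a pure product of positives and hence trivially separable. Factoring $\openone_2\otimes X_1^\dagger$ on the left and $\openone_2\otimes X_1$ on the right of $\rho_0$ reduces the question to separability of
\[
\begin{pmatrix}\openone & S\\ S^\dagger & S^\dagger S\end{pmatrix},
\]
since conjugation by a local operator $\openone_2\otimes X_1$ preserves the separable structure.

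Using the spectral decomposition of $S$, this central matrix is exactly $\sum_i \sigma_i\otimes P_i$, where $\sigma_i = \begin{pmatrix} 1 & \lambda_i\\ \lambda_i^* & |\lambda_i|^2\end{pmatrix}$ is a rank-one positive operator on $\mathbb{C}^2$ and each $P_i$ is a rank-one projection on $\mathbb{C}^d$. Conjugating term by term then expresses $\rho_0$ as $\sum_i \sigma_i\otimes(X_1^\dagger P_i X_1)$, a manifest unnormalised separable state. Adding the trivially separable remainder yields the desired decomposition of $\rho$.

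The only potentially delicate point is the very first step---reducing the SPPT condition to normality of $S$---and this succeeds precisely because invertibility of $X_1$ allows cancellation on both sides. Once normality is in hand, the rest is a direct algebraic rewrite plus the spectral theorem, so I anticipate no real obstacle.
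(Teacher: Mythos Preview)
Your proposal is correct and follows essentially the same line as the paper's own argument: invertibility of $X_1$ reduces the SPPT condition to normality of $S$, the spectral decomposition $S=\sum_i\lambda_i P_i$ yields $\begin{pmatrix}\openone & S\\ S^\dagger & S^\dagger S\end{pmatrix}=\sum_i\sigma_i\otimes P_i$, and conjugation by $\openone_2\otimes X_1$ plus the trivially separable remainder $|1\rangle\langle 1|\otimes X_2^\dagger X_2$ gives the desired decomposition of $\rho$.
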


From the above Proposition, we obtain a sufficient condition for separability of $2\otimes d$ PPT states $\rho$ with $r(\langle 0|\rho|0\rangle)=d$. To see this, we observe the condition when such a PPT state is SPPT. Let $\rho$ be a $2\otimes d$ PPT state of the form 
\begin{equation}\label{form:ppt}
\rho=\begin{pmatrix} A & B\\B^{\dagger} & C\end{pmatrix}\ \text{ with } r(A)=d.
\end{equation}
From the PPT property of $\rho$, we see that  both $C-B^{\dagger}A^{-1} B$ and $C-B A^{-1}B^{\dagger}$ are positive semi-definite matrices (see the Theorem 1.3.3 in Ref.~\cite{bha}). So we can find $X_2$ satisfying the following condition
\[
C-B^{\dagger}A^{-1}B=X_2^{\dagger} X_2
\]
since $A$ is a invertible positive definite matrix.
Thus we have 
\[
\rho=\begin{pmatrix} A & B\\B^{\dagger} & C\end{pmatrix}
=\begin{pmatrix} X_1^{\dagger}X_1 & X_1^{\dagger} \widetilde B X_1\\
X_1^{\dagger}\widetilde B^{\dagger} X_1 & X_1^{\dagger}\widetilde B^{\dagger}\widetilde B X_1 +X_2^{\dagger}X_2
\end{pmatrix},
\]
where $X_1=A^{1/2}=(A^{\dagger})^{1/2}$, $\widetilde B=(A^{\dagger})^{-1/2}B A^{-1/2}$.
Consequently, we have the following result from the condition \eqref{cond:sppt}. 
\begin{coro}\label{coro1}Let $\rho$ be a $2\otimes d$ PPT state of the form~\eqref{form:ppt}. Then $\rho$ is SPPT if and only if the condition $B^{\dagger}A^{-1}B=B A^{-1}B^{\dagger}$ is satisfied. In this case, the PPT state $\rho$ is separable.
\end{coro}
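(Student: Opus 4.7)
The plan is to combine the explicit decomposition constructed immediately before the corollary statement with the SPPT characterization in \eqref{cond:sppt} and Proposition~1. Most of the work is already in place: since $A \succ 0$, one sets $X_1 = A^{1/2}$, and the PPT hypothesis produces $X_2$ with $X_2^{\dagger} X_2 = C - B^{\dagger} A^{-1} B$, so $\rho$ is automatically in the form \eqref{sppt_state} with the ``$S$'' slot occupied by $\widetilde B = A^{-1/2} B A^{-1/2}$. What remains is to (i) convert the SPPT identity \eqref{cond:sppt} into the clean criterion on $(A,B)$ asserted in the corollary, and (ii) invoke the previous Proposition to conclude separability.

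First, I would specialize \eqref{cond:sppt} to this decomposition. It demands $X_1^{\dagger} \widetilde B^{\dagger} \widetilde B X_1 = X_1^{\dagger} \widetilde B \widetilde B^{\dagger} X_1$. Since $X_1 = A^{1/2}$ is invertible by $r(A) = d$, sandwiching by $X_1^{-\dagger}$ on the left and $X_1^{-1}$ on the right shows this is equivalent to $\widetilde B$ being normal, i.e. $\widetilde B^{\dagger} \widetilde B = \widetilde B \widetilde B^{\dagger}$. A direct computation then gives
\[
\widetilde B^{\dagger} \widetilde B = A^{-1/2} B^{\dagger} A^{-1} B A^{-1/2}, \qquad \widetilde B \widetilde B^{\dagger} = A^{-1/2} B A^{-1} B^{\dagger} A^{-1/2},
\]
and multiplying both sides by $A^{1/2}$ (which is legitimate as $A^{-1/2}$ is invertible) converts normality of $\widetilde B$ into the stated condition $B^{\dagger} A^{-1} B = B A^{-1} B^{\dagger}$. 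This establishes the equivalence.

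For the final clause, once $\rho$ is shown to be SPPT with $X_1 = A^{1/2}$ satisfying $r(X_1^{\dagger} X_1) = r(A) = d$, Proposition~1 applies verbatim and yields separability. I do not anticipate any real obstacle: the corollary is genuinely immediate once the canonical Cholesky-type decomposition of the preceding paragraph is in hand. The only minor care needed is to confirm that the matrix $\widetilde B$ arising from the decomposition really plays the role of $S$ in \eqref{eq:tri_block}, which is visible by direct inspection of the displayed form of $\rho$ just above the corollary.
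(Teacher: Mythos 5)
Your proposal is correct and follows essentially the same route as the paper: decompose $\rho$ with $X_1=A^{1/2}$, $\widetilde B=A^{-1/2}BA^{-1/2}$ and $X_2^{\dagger}X_2=C-B^{\dagger}A^{-1}B$, reduce condition \eqref{cond:sppt} to normality of $\widetilde B$ via invertibility of $X_1$, translate that into $B^{\dagger}A^{-1}B=BA^{-1}B^{\dagger}$, and invoke Proposition~1 for separability. The only detail you spell out more explicitly than the paper is the conjugation by $A^{1/2}$ converting normality of $\widetilde B$ into the stated condition, which is a welcome clarification rather than a deviation.
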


Now, we consider a $2\otimes d$ SPPT state $\rho$ with $r(X_1^{\dagger}X_1)=k<d$. Then we may write $X_1=U\Sigma V^*$ for some $d\times d$ unitary matrices $U,\,V$ and a diagonal matrix $\Sigma$ of rank $k$ with diagonal entries $\sigma_{11}\ge \sigma_{22}\ge \cdots \ge \sigma_{kk}\ge 0$ by the singular value decomposition. Thus $\rho$ in \eqref{sppt_state} can be written by 
\begin{equation}\label{sppt_dec}
\begin{aligned}
\rho=&\begin{pmatrix} 
X_1^{\dagger}X_1 & X_1^{\dagger}S X_1\\ X_1^{\dagger}S^{\dagger}X_1 & X_1^{\dagger}S^{\dagger}S X_1 +X_2^{\dagger}X_2
\end{pmatrix}\\
=&\begin{pmatrix}V & 0\\0 & V\end{pmatrix} 
\begin{pmatrix} \Sigma^2 & \Sigma \widetilde{S} \Sigma\\ \Sigma \widetilde{S}^{\dagger}\Sigma & \Sigma \widetilde{S}^{\dagger}\widetilde{S}\Sigma\end{pmatrix}
\begin{pmatrix}V^{\dagger} & 0\\ 0 & V^{\dagger}\end{pmatrix} +
\begin{pmatrix}0 & 0 \\0 & X_2^{\dagger}X_2\end{pmatrix},
\end{aligned}
\end{equation}
where $\widetilde S=U^{\dagger} S U$.
Now, we write $\Sigma$ and $\widetilde S$ as block matrices
\[
\Sigma=\begin{pmatrix} D_k & 0 \\ 0 & 0\end{pmatrix},\quad \widetilde S=\begin{pmatrix} \widetilde{S}_{11} & \widetilde{S}_{12}\\\widetilde{S}_{21} &\widetilde{S}_{22}\end{pmatrix},
\] 
where $D_k$ and $\widetilde S_{11}$ are $k\times k$ matices. Then we have 
\[
\begin{aligned}
\widetilde \rho:=&\begin{pmatrix} \Sigma^2 & \Sigma \widetilde{S} \Sigma\\ \Sigma \widetilde{S}^{\dagger}\Sigma & \Sigma \widetilde{S}^{\dagger}\widetilde{S}\Sigma\end{pmatrix}\\
=&\begin{pmatrix} D_k^2 & 0 & D_k \widetilde S_{11} D_k & 0\\ 0 & 0 & 0 & 0\\D_k \widetilde S^{\dagger}_{11}D_k & 0 & D_k (\widetilde S^{\dagger}_{11}\widetilde S_{11}+\widetilde S^{\dagger}_{21}\widetilde S_{21})D_k & 0 \\ 0 & 0 & 0 & 0\end{pmatrix},
\end{aligned}
\]
with $D_k(\widetilde S_{11}^{\dagger}\widetilde S_{11}+\widetilde S_{21}^{\dagger}\widetilde S_{21})D_k=D_k(\widetilde S_{11}\widetilde S_{11}^{\dagger}+\widetilde S_{12}\widetilde S_{12}^{\dagger})D_k$ by the condition~\eqref{cond:sppt}.
It is easy to see that $\widetilde \rho$ is unnormalized separable state if and only if the following reduced unnormalized $2\otimes k$ state
\begin{equation}\label{eq:ppt}
\begin{pmatrix}
D_k^2 & D_k\widetilde S_{11} D_k\\ D_k\widetilde S^{\dagger}_{11} D_k & D_k(\widetilde S^{\dagger}_{11}\widetilde S_{11}+\widetilde S_{21}^{\dagger}\widetilde S_{21})D_k\end{pmatrix}
\end{equation}
is separable. We note that the above $2\otimes k$ state is a PPT state, although it may not be SPPT. Therefore, if $k\le 3$ then we see that $\widetilde \rho$ is separable. In this case, we can conclude that $\rho$ is separable in \eqref{sppt_dec}. Consequently, we have the following.
\begin{prop} 
Let $\rho$ be a $2\otimes d$ SPPT state of the form \eqref{sppt_state} with $r(X_1^{\dagger}X_1)\le 3$. Then $\rho$ is separable.
\end{prop}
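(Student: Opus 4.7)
My plan is to ride on the decomposition already set up in the paragraph preceding the statement and reduce the separability question for $\rho$ to a known low-dimensional PPT result. The starting point is the identity \eqref{sppt_dec}, which writes $\rho$ as the sum of a local unitary conjugate of $\widetilde\rho$ and the manifestly separable term $\lvert 1\rangle\langle 1\rvert \otimes X_2^\dagger X_2$. Since local unitaries and nonnegative sums preserve separability, it suffices to prove that $\widetilde\rho$ itself is separable.

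Next I would invoke the block-matrix shape of $\widetilde\rho$ obtained after writing $\Sigma$ and $\widetilde S$ in the block form compatible with the rank-$k$ splitting. Because the second block-row and block-column of $\widetilde\rho$ are identically zero, separability of $\widetilde\rho$ is equivalent to separability of the $2\otimes k$ compression exhibited in \eqref{eq:ppt}. So the whole problem collapses to deciding separability of a specific $2\otimes k$ state, and the SPPT condition \eqref{cond:sppt} (in the form $D_k(\widetilde S_{11}^\dagger \widetilde S_{11}+\widetilde S_{21}^\dagger \widetilde S_{21})D_k=D_k(\widetilde S_{11}\widetilde S_{11}^\dagger+\widetilde S_{12}\widetilde S_{12}^\dagger)D_k$) is precisely what guarantees that this compressed state is PPT.

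The final step is to apply the Horodecki classification: every PPT state of a $2\otimes 2$ or $2\otimes 3$ system is separable (and the $2\otimes 1$ case is trivial). Under the hypothesis $k=r(X_1^\dagger X_1)\le 3$, the compressed state in \eqref{eq:ppt} lives in one of these low-dimensional systems, hence is separable; pulling this back through the previous two reductions gives separability of $\rho$.

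The only point that needs a bit of care, and which I would flag as the conceptual subtlety rather than a true obstacle, is that the reduced $2\otimes k$ PPT state need not be SPPT, so one cannot argue by induction on $d$ using the SPPT framework alone. The proof instead relies essentially on the stronger classical fact that PPT equals separability in $2\otimes 2$ and $2\otimes 3$; this is also exactly why the argument stops at $k=3$ and does not extend to $k=4$, consistent with the paper's announcement that new ideas are needed for the $2\otimes 4$ case and that separability fails outright for $2\otimes 5$.
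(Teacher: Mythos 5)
Your proposal is correct and follows essentially the same route as the paper: singular value decomposition of $X_1$, reduction via \eqref{sppt_dec} to the compressed $2\otimes k$ PPT state \eqref{eq:ppt}, and the Horodecki $2\otimes 2$ / $2\otimes 3$ result, with the same caveat that the reduced state need not be SPPT. No gaps.
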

To answer the conjecture asked in \cite{chrus}, we will show the following.
\begin{thm}
All $2\otimes d$ states with strong positive partial transpose are separable if and only if $d\le 4$.
\end{thm}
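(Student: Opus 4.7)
The ``if'' direction follows immediately from Propositions~1 and~2: if $d\le 3$, then $k:=r(X_1^\dagger X_1)\le 3$ and Proposition~2 applies; if $d=4$, then either $k\le 3$ (Proposition~2) or $k=4=d$ (Proposition~1). So every $2\otimes d$ SPPT state with $d\le 4$ is separable.

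For the ``only if'' direction, it suffices to construct a single entangled $2\otimes 5$ SPPT state. The case $d>5$ then follows by padding $X_1$, $X_2$, $S$ with zero rows and columns: this preserves the SPPT condition \eqref{cond:sppt} and yields a state whose support lies in a $2\otimes 5$ subsystem, so it remains entangled. Inside $d=5$, the only rank regime not covered by Propositions~1 and~2 is $k=4$. My plan is therefore to set $D_k=\openone_4$ and $X_2=0$ and to pick $\widetilde S_{11}$ non-normal together with rank-one blocks $\widetilde S_{12}$ and $\widetilde S_{21}$ satisfying
\[
\widetilde S_{11}^\dagger\widetilde S_{11}+\widetilde S_{21}^\dagger\widetilde S_{21}
=\widetilde S_{11}\widetilde S_{11}^\dagger+\widetilde S_{12}\widetilde S_{12}^\dagger.
\]
A weighted-shift ansatz for $\widetilde S_{11}$ renders the commutator $\widetilde S_{11}^\dagger\widetilde S_{11}-\widetilde S_{11}\widetilde S_{11}^\dagger$ diagonal with signature $(+,-)$, which is easily absorbed into a difference of two rank-one positives, yielding a multi-parameter family of candidate SPPT states.

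The main obstacle is certifying that the reduced $2\otimes 4$ state in \eqref{eq:ppt} arising from this data is genuinely entangled. Corollary~1 provides no direct leverage: the non-normality of $\widetilde S_{11}$ only says the reduced state is not SPPT when viewed as a $2\otimes 4$ state, but a non-SPPT PPT state may still be separable. I would fix a concrete point of the family and prove entanglement via the range criterion---exhibiting the absence of product vectors in the range of the state whose complex conjugates lie in the range of its partial transpose---or, failing that, by constructing an explicit indecomposable entanglement witness adapted to the shift structure of $\widetilde S_{11}$, in the spirit of the $2\otimes 4$ Woronowicz example. Once entanglement is verified at one such point, the padding argument lifts the example to $2\otimes d$ for every $d\ge 5$, which combined with the ``if'' direction establishes the equivalence in the theorem.
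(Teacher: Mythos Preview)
Your ``if'' direction and your construction strategy for the ``only if'' direction are exactly the paper's: Propositions~1 and~2 cover $d\le 4$, and for $d=5$ the paper takes $D_k=\openone_4$, $X_2=0$, $\widetilde S_{11}$ the unweighted nilpotent shift, and rank-one $\widetilde S_{12},\widetilde S_{21}$ balancing the commutator---precisely your weighted-shift ansatz. The one step you leave open is the entanglement certification of the reduced $2\otimes 4$ state~\eqref{eq:ppt}, and here the paper avoids any range-criterion or witness computation by a sharper observation: with $\widetilde S_{12}=(\beta_1,0,0,\beta_2)^{\mathrm t}$, $\widetilde S_{21}=(\beta_2,0,0,\beta_1)$, $\beta_1^2=(1-b)/2b$, $\beta_2^2=(1+b)/2b$, the reduced state is \emph{literally} Horodecki's $2\otimes 4$ PPT entangled family \cite{phoro}, so entanglement follows by citation. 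Your plan would work if carried out, but recognizing the reduced state as a known bound-entangled example is what turns the outline into a two-line proof.
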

\begin{proof}
By combining the Proposition 1 and 2, we see that all $2\otimes 4$ SPPT states are separable. To complete the proof, we construct a $2\otimes 5$ SPPT state which is not separable.
Define $5\times 5$ matrices $X_1$ and $S$ by 
\[
X_1=\begin{pmatrix} 1 & 0 & 0 & 0 & 0 \\0 & 1 & 0 & 0 & 0 \\ 0 & 0 & 1 & 0 & 0\\ 0 & 0 & 0 & 1 & 0\\0 & 0 & 0 & 0 & 0\end{pmatrix},\
S=\begin{pmatrix} 0 & 1 & 0 & 0 & \beta_1 \\ 0 & 0 & 1 & 0 & 0 \\ 0 & 0 & 0 & 1 & 0\\0 & 0 & 0 & 0 & \beta_2\\ \beta_2 & 0 & 0 & \beta_1 & 0\end{pmatrix}
\]
where $\beta_1=[(1-b)/2b]^{\frac 12}$ and $\beta_2=[(1+b)/2b]^{\frac 12}$ with $0<b<1$. We also put $X_2$ by $5\times 5$ zero matrix. Then we define $2\otimes 5$ SPPT state $\varrho_0$ by Eqs.~\eqref{eq:tri_block} and \eqref{sppt_state}:
\[
\varrho_0=\begin{pmatrix}
1 & 0 & 0 & 0 & \cdot & 0 & 1 & 0 & 0 & \cdot\\
0 & 1 & 0 & 0 & \cdot & 0 & 0 & 1 & 0 & \cdot\\
0 & 0 & 1 & 0 & \cdot & 0 & 0 & 0 & 1 & \cdot\\
0 & 0 & 0 & 1 & \cdot & 0 & 0 & 0 & 0 & \cdot\\
\cdot & \cdot & \cdot & \cdot & \cdot & \cdot & \cdot & \cdot & \cdot & \cdot  \\
0 & 0 & 0 & 0 & \cdot & \gamma_1 & 0 & 0 & \gamma_2 & \cdot\\
1 & 0 & 0 & 0 & \cdot & 0 & 1 & 0 & 0 & \cdot\\
0 & 1 & 0 & 0 & \cdot & 0 & 0 & 1 & 0 & \cdot\\
0 & 0 & 1 & 0 & \cdot & \gamma_2 & 0 & 0 & \gamma_1 & \cdot\\
\cdot & \cdot & \cdot & \cdot & \cdot & \cdot & \cdot & \cdot & \cdot & \cdot  
\end{pmatrix},
\]
where $\gamma_1=(b+1)/(2b),\ \gamma_2=\sqrt{b^2-1}/(2b)$, and dot($\cdot$) denotes zero. We note that the corresponding reduced $2\otimes 4$ state as in \eqref{eq:ppt} is the PPT entangled state given by Horodecki \cite{phoro}. Therefore, we conclude that $\varrho_0$ is an entangled state with strong positive partial tranpose. This completes the proof.
\end{proof}

Lastly, for $d=3,4$, we show that the set of all $2\otimes d$ SPPT states is proper subset of the set of all separable $2\otimes d$ states. To see this, we consider a $2\otimes 3$ state $\varrho_1=\begin{pmatrix}
\rho_{11} & \rho_{12}\\\rho_{12}^{\dagger} & \rho_{22}\end{pmatrix}$ defined by 
\[
\rho_{11}=\begin{pmatrix}
3 & \cdot & \cdot \\
\cdot & 4 & 2 \\
\cdot  & 2 & 3\end{pmatrix},\,
\rho_{12}=\begin{pmatrix}
\cdot & \cdot & \cdot\\
\cdot & \cdot & 1\\
1 & -1 & \cdot\end{pmatrix},\,
\rho_{22}=\begin{pmatrix}
2 & 1 & -1\\
1 & 6 & 1\\
-1 & 1 & 3\end{pmatrix}.
\]
Then, we can easily check that  four $3\times 3$ matrices $\rho_{11},\, \rho_{22},\, \rho_{22}-\rho_{12}^{\dagger}\rho_{11}^{-1}\rho_{12}$ and $ \rho_{22}-\rho_{12} \rho_{11}^{-1} \rho_{12}^{\dagger}$ are all positive definite matrices. Therefore, we see that $\varrho_1$ is $2\otimes 3$ PPT state, and so $\varrho_1$ is separable. On the other hand, we see that
\[
\rho_{12}^{\dagger}\rho_{11}^{-1}\rho_{12}-\rho_{12} \rho_{11}^{-1}\rho_{12}^{\dagger}=\dfrac 1{12}\begin{pmatrix} 6 & -6 & -3\\-6 & 0 & 0\\-3 & 0 & -4\end{pmatrix}.
\]
Therefore, $\varrho_1$ is not SPPT by the Corollary~\ref{coro1}.
Now, we define $2\otimes 4$ state $\rho_2$ using the above $2\otimes 3$ state $\rho_1$ as follows:
\[
\rho_2=\begin{pmatrix}
\rho_{11} & 0 & \rho_{12} & 0\\
0 & 1 & 0 & 0\\
\rho_{12}^{\dagger} & 0 & \rho_{22} & 0\\
0 & 0 & 0 & 0\end{pmatrix}.
\]
Then it is obvious that $\varrho_2$ is separable. We can also show that $\varrho_2$ is not SPPT by the Corollary~\ref{coro1}. This completes the proof of claim.

In conclusion, we showed that all $2\otimes 4$ SPPT states are separable. We also construced a family of $2\otimes 5$ SPPT entangled states using Horodecki's $2\otimes 4$ PPT entangled states. So the conjecture \cite{chrus} on the separability of SPPT states is completely settled. We also clarify the relation between the set of all $2\otimes d$ separable states and the set of all $2\otimes d$ SPPT states for the case of $d=3,4$.

\begin{acknowledgments}
This work was partially supported by the Basic Science Research Program through the
National Research Foundation of Korea(NRF) funded by the Ministry of Education, Science
and Technology (Grant No. NRFK 2012-0002600)
\end{acknowledgments}


\end{document}